\documentclass[preprint,review,12pt]{elsarticle} 

\usepackage[utf8]{inputenc}

\usepackage[margin=1in]{geometry}
\usepackage{graphicx}
\usepackage{amssymb,amsmath}
\usepackage{epstopdf}
\DeclareGraphicsRule{.tif}{png}{.png}{`convert #1 `dirname #1`/`basename #1 .tif`.png}
\usepackage{url}
\usepackage{amsthm}
\usepackage[normalem]{ulem}
\usepackage{dsfont}
\usepackage{amssymb}
\usepackage{graphics}
\usepackage{tikz}
\usetikzlibrary{arrows.meta}
\usepackage{array, multirow, boldline}
\usepackage{multirow}
\usepackage{xcolor}
\usepackage[hidelinks]{hyperref}
\usepackage{color,soul}
\usepackage{mathtools}
\usepackage{tablefootnote}
\usepackage{lineno}

\usetikzlibrary{shapes.geometric,arrows,shapes,snakes,automata,backgrounds,petri}
\tikzstyle{startstop} = [rectangle, rounded corners, minimum width=1cm, minimum height=1cm,text centered, draw=black]
\tikzstyle{io} = [rectangle, rounded corners, minimum width=1cm, minimum height=1cm,text centered, draw=black]
\tikzstyle{process} = [rectangle, minimum width=3cm, minimum height=1cm, text centered, draw=black]
\tikzstyle{arrow} = [thick,->,>=stealth,black]
\tikzstyle{darrow} = [dashed,->,>=stealth, draw=blue!1000]
\tikzstyle{garrow} = [dashed,->,>=stealth, draw=violet!100]
\tikzstyle{barrow} = [dashed,->,>=stealth]
\tikzstyle{init} = [pin edge={thick, ->, black}]
\tikzstyle{int} = [pin edge={to-,thick, ->, black}]
\usepackage{wrapfig,lipsum,booktabs}

\newtheorem{theorem}{Theorem}

\newcommand{\beq}{\begin{equation}}
\newcommand{\eeq}{\end{equation}}

\newcommand{\benum}{\begin{enumerate}}
\newcommand{\eenum}{\end{enumerate}}
\newcommand{\bitem}{\begin{itemize}}
\newcommand{\eitem}{\end{itemize}}

\newcommand{\ds}{\displaystyle}

\newcommand{\barg}{\bar{g}}

\renewcommand{\hbar}{\bar{h}}

\definecolor{igreen}{rgb}{0.0, 0.56, 0}


\journal{Computers Mathematics and Applications}

\begin{document}

\begin{frontmatter}

\title{A Mathematical Model of Opinion Dynamics with Application to Vaccine Denial}

\author[1]{Daniel Cicala}
\ead{cicalad1@southernct.edu}

\author[2]{Yi Jiang}
\ead{yjianaa@siue.edu}

\author[3]{Jane HyoJin Lee}
\ead{jlee7@stonehill.edu}

\author[4]{Kristin Kurianski\corref{cor1}}
\ead{kkurianski@fullerton.edu}

\author[5]{Glenn Ledder}
\ead{gledder@unl.edu}

\cortext[cor1]{Corresponding author}

\affiliation[1]{
    organization={Department of Mathematics, Southern Connecticut State University},
    addressline={501 Crescent St.},
    city={New Haven},
    state={CT},
    postcode={06515},
    country={U.S.A.}}

\affiliation[2]{
    organization={Department of Mathematics and Statistics, Southern Illinois University Edwardsville},
    addressline={6 Hairpin Dr.},
    city={Edwardsville},
    state={IL},
    postcode={62026},
    country={U.S.A.}}

\affiliation[3]{
    organization={Department of Mathematics, Stonehill College},
    addressline={320 Washington St.},
    city={Easton},
    state={MA},
    postcode={02357},
    country={U.S.A.}}

\affiliation[4]{
    organization={Department of Mathematics, Cal State University Fullerton},
    addressline={800 N. State College Blvd.},
    city={Fullerton},
    state={CA},
    postcode={92831},
    country={U.S.A.}}

\affiliation[5]{
    organization={Department of Mathematics, University of Nebraska, Lincoln},
    addressline={203 Avery Hall},
    city={Lincoln},
    state={NE},
    postcode={68588},
    country={U.S.A.}}

\date{May 2, 2024}


\begin{abstract}
Public health outcomes can be heavily influenced by the landscape of public opinion; hence, it is important to understand how that landscape changes over time.  For one, opinions on public health issues are responsive to official pronouncements, whether from the governmental or professional medical establishments. Additionally, in today's world of high speed communication, opinion can also be highly responsive to the broadcast opinions of ``influencers'' whose large numbers of followers assure them of a broad reach.  To understand the opinion landscape in a general sense, we develop an ordinary differential equation model for opinion change that is based primarily on attraction to the opinions of prominent sources.  The individual opinion change model is then used to develop a Fokker–Planck-type partial differential equation model for the overall opinion landscape.  This model is shown to have a stable equilibrium solution, and the dependence of the equilibrium solution on key model parameters is illustrated with examples based on opinion regarding vaccination.
\end{abstract}

\begin{keyword}
opinion dynamics \sep macroscopic model \sep public opinion \sep influencers \sep misinformation \sep diffusion-advection model \sep Fokker-Plank equation \sep public health \sep vaccination attitudes \sep social networks \sep stability for partial differential equations
\end{keyword}

\end{frontmatter}

\section{Introduction}

In recent years, social media has emerged as a growing alternative to traditional institutions for gathering information \cite{bail2018exposure,zhuravskaya2020political}. Individual people are performing roles previously occupied by mainstream news, the medical establishment, and other areas dominated by experts. Occasionally, an individual becomes sufficiently popular as to genuinely compete with a traditional institution; such individuals are called \emph{influencers} \cite{kata2012anti,ki2019mechanism,zimmermann2022influencers}. Information from these influencers plays an outsized role in the formation of individual opinion \cite{diresta2024invisible}.  Another feature of the current landscape is the increasing disagreement between governmental and professional authorities on the extent to which established medical science should be followed on politically-charged issues such as vaccination requirements and vaccine-related research \cite{abcnews2025desantis, cdc2025immunization, cnn2025fundingcuts, pbs2025rfkjr, hhs2025mrna}. This schism correlates with the public's waning confidence in the efficacy and safety of vaccines \cite{annenberg2023vaccine}. To understand how government policy, expert recommendations, and influencer impact shape public opinion, we introduce an opinion dynamics model that incorporates these features. 

To represent the dynamics of opinion change, we propose two models: an individual opinion model and an aggregate opinion model.  Both models take an abstract view of opinion as a scalar value in a continuous range.  The individual opinion model is a deterministic ordinary differential equation that governs the dynamics of an individual's opinion on some issue in response to the various forces that exert a pull on that opinion.  The aggregate opinion model is a partial differential equation that tracks the population density function of opinion in response to opinion diffusion as well as the forces acting on individual opinion.

\subsection{Comparison to other approaches}

The majority of opinion dynamics models are microscopic, in that information is propagated through the interactions of a finite number of agents distributed in an opinion space, with the opinion distribution updated each time a pair of agents meet \cite{banisch2012agent,schweighofer2020agent,shang2014agent}.  A smaller number of models are macroscopic, in the sense that it is the opinion density of a well-mixed population that is considered.  Models can also be classified by the definition of opinion as a discrete variable \cite{galam2002minority,galam1991towards,holley1975ergodic,sznajd2000opinion} or a continuous one \cite{deffuant2000mixing,hegselmann2002opinion,sayama2020enhanced}. The combination of a macroscopic approach and opinion as a continuous variable allows for the use of a differential equation model rather than an individual-based model, thereby providing a deterministic response of aggregate opinion to societal forces.

Opinion change models generally consider one's opinions to be influenced by the opinions of all other individuals in a subset of the larger community, with no one individual's opinion considered more influential than others.
While several studies have incorporated stubborn agents or zealots into models \cite{klamser2017zealotry,verma2014impact}, only a few have incorporated agents with outsized influence \cite{coculescu2024opinion,helfmann2023modelling}. Moreover, we are aware of none in which opinion change is driven primarily by major actors, such as influencers and official mainstream sources.  Our model focuses on major actors, consistent with studies that have identified prominent influencers as a significant driver for opinion formation \cite{diresta2024invisible}.  We go so far as to omit any explicit role of individual interactions among community members.  This may seem limiting, but it makes for a far simpler and parsimonious treatment of opinion dynamics.  There is little practical difference between one ordinary individual affecting another and the latter's having been affected instead by the influencer from whom the former's opinion is derived. 

A common feature of opinion models is \emph{bounded confidence} \cite{hegselmann2002opinion, lorenz2007continuous, zhao2016bounded}. This is a threshold $d$ such that no opinion change occurs if the distance between a pair of agents in opinion space is larger than $d$. By implementing bounded confidence, such models incorporate homophily, which is the tendency for individuals to be attracted primarily to individuals of similar opinion.  Our model assures homophily without bounded confidence by making the strength of influence exponentially small as the opinion difference grows.   

Some investigators use the analogy of interacting agents as interacting particles to build opinion models that draw from the theory of kinetics \cite{albi2016opinion,bertotti2008discrete,perez2018opinion}. In the limit of large populations, a microscopic kinetic-based opinion model becomes a macroscopic opinion model in the form of a Fokker-Planck equation \cite{during2009boltzmann,furioli2017fokker,toscani2006kinetic}. While our model also uses a Fokker-Planck equation, there are key differences in its derivation. The microscopic kinetic model is inspired by a Boltzmann-type equation from particle physics.  Two agents, each with their own location in opinion space, meet and then update their location by moving closer together via a symmetric compromise model with white noise. From this Boltzmann-type equation, by taking appropriate limits, is derived a Fokker-Planck equation, which is solved numerically, using a linearization framework of Egger and Sch{\"o}berl \cite{egger2010hybrid}. This type of model neglects all outside influences on opinion.

In contrast, our model derivation more closely follows the spirit of Sayama (2020) \cite{sayama2020enhanced}. We begin with a deterministic ordinary differential equation to describe the movement of a single agent, not in response to a random interaction with another agent, but in response to established influencers and media. The resulting Fokker-Planck equation arises as the transport equation with advective flux due to individual opinion movement and diffusive flux in the population as a whole. In addition to the numerical results of the kinetics-based models, we give analytical results about well-posedness and stability.  Analytical results about opinion models are scarce, though some do exist \cite{chazelle2017well,gomez2012bounded}. While the results in Sayama's use a Fokker-Planck equation, they do not include the necessary restriction that the total size of a population must be finite \cite{sayama2020enhanced}. In addition to mathematical results, our model is amenable to studies that identify the specific effects of changes in parameter values.

\subsection{Assumptions Used to Build the Model} \label{sec:assumptions}

In practice, a number of influences can lead people to change their opinion, including communication with other people, pronouncements from authority groups, pronouncements from individuals claiming to be experts, and news media reports of current conditions.  To keep the model simple while still including the most important factors, we assume that opinions change in response to four influences:
\benum
\item
One or more groups of experts, such as the American Medical Association and the Centers for Disease Control on medical issues, each with a consensus opinion;
\item
An aggregate of contrarian influencers, with a single consensus opinion;
\item
A general tendency for opinions to moderate over time in the absence of factors that directly influence opinion change; and
\item
Random drift of opinion.
\eenum

In this description, we assume that there is only a small loss of generality in combining contrarian influencers into a single group, as negative voices on controversial issues tend to have opinions derived from other negative voices.  Similarly, internet influencers whose opinions are similar to a group of experts can be considered simply by increasing the influence strength of those experts.  Internet influencers with a neutral opinion do not focus attention on that neutrality. 

While the contrarian influencers are combined into a single group, we do not always do the same for groups of experts.  With regard to medical issues, various groups of experts tended to have similar opinions prior to 2025; since early in 2025, the governmental medical establishment in the United States and some other countries has become increasingly neutral or contrarian toward vaccination and some other public health issues, while the professional medical establishment have retained their prior opinion.  
An example of the trend toward neutrality is the statement by the US Secretary of Health and Human Services that people ``should do their own research on vaccination'' \cite{stolberg2025kennedy}, which is not suggesting that people should do \emph{scientific} research, but merely that they collect opinions from their preferred sources.  
As of this writing, a panel consisting largely of vaccine skeptics has become responsible for making vaccination recommendations in the United States, indicating a trend for the governmental medical establishment to move beyond neutrality to a contrarian view. For example, the Center for Disease Control and Prevention website now states that a link between vaccines and autism cannot be ruled out, after a longstanding stance that there is no such link \cite{cdc2025link}.

The tendency toward moderation is necessary for the mathematical problem of aggregate opinion dynamics to be well-posed, but it is a plausible influence in any case, and we will take its strength to be relatively small compared to the first two groups. 
Random forces are important, as without them, individual opinions would coalesce to a small number of single opinions.  In our model, we include a diffusion term to the model for aggregate opinion dynamics rather than prescribing random forces for individual opinion dynamics.

The first three forces in our list fall into two broad groups.  The tendency to moderation is a \textbf{global} attractor, meaning that its strength increases with opinion distance.  This is reasonable, as the action of a general force on extreme opinions should be stronger than on moderate opinions.  In contrast, the communities of experts and the aggregate of contrarian influencers are \textbf{local} attractors, meaning that they move individuals toward their position with a strength that rapidly dwindles to zero for subjects whose opinion is far removed from that of the influencer.  This is a reasonable assumption even though it is not strictly accurate.  Some people may react negatively to influences that are far removed from their opinion \cite{bail2018exposure}, but these reactions can be assumed to be weak compared to the positive reaction to influences of like opinion.

\section{Models of Individual and Aggregate Opinion\\ Dynamics}

We consider a simple setting in which an individual's opinion on some issue can be quantified as a scalar value $x$ on a continuum, which may be bounded or may be the whole real line.  Our goal is to build a model for the dynamics of individual opinion change and then use it as the foundation for a model of aggregate opinion change.

\subsection{Dynamics of Individual Opinion Change}

Assume that the evolution of the opinion of individual $i$ is governed by a differential equation
\beq
\dot{x_i}=g(x_i,t;p),
\eeq
where $g_x$ is continuous, $g$ is piecewise continuous in time, and $p$ is a vector of parameters needed for the form of $g$.  Individual opinion should not be arbitrarily far from the neutral opinion $x=0$; therefore, we require there to be extreme opinions $\pm X$ such that 
\beq
\label{grestrictions}
g(x,t)<0 \mbox{ for } x>X, \qquad g(x,t)>0 \mbox{ for } x<-X,
\eeq
with nonzero limits as $x \to \pm \infty$.  This guarantees that any initial value of $x$ will evolve to a value in the interval $[-X,X]$.

In our study, we make the additional assumption that the forces that change opinion act independently, allowing us to prescribe the general form
\beq
g(x_i,t)=-\sum_{j=0}^J \, b_j(t) \cdot k_j(x_i-x_j,t) \cdot (x_i-x_j), 
\eeq
where $b_j>0$ represents the influence strength and $k_j$ represents the influence shape.

Each term in the summation represents the combined effect of a collection of influences having opinion $x_j(t)$.  
We make the following assumptions on the function $k$:

\benum
\item
$k(h,t) \ge 0 \: \forall (h,t)$ so that influencers always attract (see Section \ref{sec:assumptions}). 
\item
$k(h,t)$ is even about $h=0$ so that the influence does not depend on the direction of the pull.
\item
$k'(h,t) \le 0$ for $h>0$, so that the influence relative to $b_j h$ does not increase with distance. 
\eenum

Attractors can be further classified into two types: global and local.  
Global attractors act at any distance: $\ds \lim_{h \to \infty} k(h,t) >0$; while local attractors have strength that vanishes with increasing distance: $\ds \lim_{h \to \infty} k(h,t) =0$.
In the case of local attractors, we can normalize $k$ by requiring
\benum
\setcounter{enumi}{3}
\item
$\int_{-\infty}^\infty |h k(h,t)| \,dh = 1$.  
\eenum
This assures that the coefficient $b_j$ serves as an objective measure of influence strength. Property \eqref{grestrictions} requires that there be at least one global attractor.

\subsection{Dynamics of Aggregate Opinion Change}

Aggregate opinion is determined by an opinion density function $u(x,t)$ such that the probability of a randomly-chosen individual lies in the interval $(x_0,x_1)$ is 
\beq
P \{ x_0<x<x_1 \}=\int_{x_0}^{x_1} u(x,t) \,dx.
\eeq
The evolution of this opinion density function is governed by the transport equation\footnote{See \ref{sec:model-derivation} for a derivation.} 
\beq
\label{transport}
u_t = (Fu)_x, 
\eeq
where
\beq
\label{fluxdef}
Fu \equiv Du_x-gu
\eeq
is the opinion density flux that combines a diffusive component and an advective component, where $g=\dot{x}$.  The idea of the advective component is that individuals of opinion $x$, whose density is $u$, move along the opinion continuum at a rate given by $g$.

The partial differential equation \eqref{transport} has to be augmented by an initial condition and boundary conditions.  In light of property \eqref{grestrictions}, we assume that $u$ is negligible outside of some extreme range $-X<x<X$.\footnote{It will not be 0 because of the small amount of diffusion needed to make the total flux at $\pm X$ be 0.} With the total opinion population fixed at 1, we have
\beq
\label{totalpop}
\int_{-X}^X u(x,t) \,dx=1+o(1),
\eeq
where the last term indicates that the error in ignoring the regions where $|X|>1$ can be made arbitrarily small by increasing $X$.
Differentiating this integral with respect to time yields
\[ 0=\int_{-X}^X u_t \,dx =\int_{-X}^X (Fu)_x \,dx =Fu(X,t)-Fu(-X,t). \] 
Thus, the fluxes at the endpoints must be equal.  Given that the opinion density and its gradient are negligible at $\pm X$, we are justified in interpreting this last result as saying that the fluxes are 0 at the extreme opinions.\footnote{One might have been inclined to use Dirichlet conditions, $u(\pm X,t)=0$; however, this is inconsistent with the requirement that the total opinion population must be conserved \eqref{totalpop}.}  We therefore have boundary conditions
\beq
\label{bcs}
Fu(\pm X, t) = 0.
\eeq

\section{Properties of the Opinion Dynamics Models}

We consider the problem
\beq
\label{prob1a}
u_t = (Fu)_x, \qquad Fu(\pm X,t)=0, \qquad u(x,0)=u_0(x),
\eeq
where
\beq
\label{prob1b}
Fu \equiv Du_x-g(x,t) u, \qquad \int_{-X}^X u_0(x) \,dx = 1, \qquad u_0(x)>0.
\eeq
with $g_x$ and $u_0$ continuous and $g(x,t)$ piecewise continuous in time.  We prove the existence of a unique solution and the stability of its equilibrium distribution.

\subsection{The Equilibrium Distribution}

If there exists a function $\barg$ such that 
\beq
\lim_{t \to \infty} g(x,t)=\barg(x), 
\eeq
then it makes sense to search for an equilibrium solution.
With $u_t=0$, we can immediately integrate the equation $(Fu^*)'=0$ and apply the boundary conditions to obtain a first-order differential equation problem for the equilibrium distribution $u^*(x)$: 
\beq
\label{eqeqn}
D{u^*}'=\barg u^*, \qquad \int_{-X}^X u^*(x) \, dx=1.
\eeq
The ODE in \eqref{eqeqn} represents the physical result that there is no opinion flux \emph{anywhere} when the opinion distribution is at equilibrium. The integral requirement in \eqref{eqeqn} follows from conservation of mass and the integral condition on the initial condition.  Note also that the ODE shows that local extrema of $u^*$ occur precisely at points where $\barg(x)=0$.

The ODE is separable, with solution
\beq
\label{soln1}
u^*(x)=\frac{1}{I_0} \; e^{G(x)},
\eeq
where
\beq
\label{soln2}
G(x) = \frac{1}{D}\int_0^x \barg(y) \,dy, \qquad I_0=\int_{-X}^X e^{G(x)} \,dx.
\eeq
We note for future reference that $u^*>0$ everywhere. 

\subsection{Existence/Uniqueness, and Equilibrium Solution Stability}

\begin{theorem}
The problem \eqref{prob1a}-\eqref{prob1b} has a unique solution.
\end{theorem}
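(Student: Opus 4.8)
The plan is to regard \eqref{prob1a}–\eqref{prob1b} as a linear, uniformly parabolic equation written in divergence (flux) form, $u_t=(Fu)_x=Du_{xx}-(gu)_x$, on the bounded interval $[-X,X]$, with the no-flux conditions $Fu(\pm X,t)=0$ serving as natural (Robin-type) boundary conditions. Because $D>0$ is constant and $g,\,g_x$ are bounded on $[-X,X]$ for each $t$, this operator is uniformly elliptic with bounded coefficients, so the standard weak-solution/Galerkin theory for second-order parabolic problems applies. I would first record the weak formulation: multiplying by a test function $\varphi\in H^1(-X,X)$ and integrating by parts gives
\[
\int_{-X}^X u_t\,\varphi\,dx = \big[\varphi\,Fu\big]_{-X}^X-\int_{-X}^X (Du_x-gu)\,\varphi_x\,dx,
\]
and the boundary term vanishes \emph{precisely because} $Fu(\pm X,t)=0$. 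This yields the bilinear form $B[u,\varphi;t]=D\int u_x\varphi_x\,dx-\int g\,u\,\varphi_x\,dx$, for which a Gårding inequality $B[u,u;t]\ge \tfrac{D}{2}\|u_x\|_{L^2}^2-C\|u\|_{L^2}^2$ holds by Young's inequality, with $C$ depending only on $\sup|g|$ and $D$.

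Uniqueness I would establish directly by an energy estimate. If $u_1,u_2$ are two solutions, their difference $w=u_1-u_2$ solves the homogeneous problem with $w(\cdot,0)=0$ and $Fw(\pm X,t)=0$. Testing with $w$ and using the vanishing boundary term gives $\tfrac{d}{dt}\tfrac12\|w\|_{L^2}^2=-B[w,w;t]\le -\tfrac{D}{2}\|w_x\|_{L^2}^2+C\|w\|_{L^2}^2\le C\|w\|_{L^2}^2$; Grönwall's inequality with zero initial data then forces $w\equiv0$.

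For existence I would use a Galerkin approximation. Fix an orthonormal basis of $L^2(-X,X)$ consisting of $H^1$ functions—conveniently the Neumann eigenfunctions of the symmetric operator obtained from the substitution $u=e^{G}v$ (which, as in \eqref{soln1}–\eqref{soln2}, turns $Fu=0$ into the homogeneous Neumann condition $v_x(\pm X)=0$)—and project the weak problem onto the first $N$ modes. This produces a linear ODE system for the coefficients whose right-hand side is piecewise continuous in $t$; solving it on each interval where $g$ is continuous and concatenating (matching the continuous-in-$L^2$ coefficient vector across the jump times) yields approximants $u_N$. The Gårding inequality supplies the uniform energy bounds $\sup_t\|u_N\|_{L^2}^2+\int_0^T\|u_{N,x}\|_{L^2}^2\,dt\le C$, so a subsequence converges weakly in $L^2(0,T;H^1)$ with $\partial_t u_N$ bounded in $L^2(0,T;H^{-1})$; by Aubin–Lions the limit $u$ is a weak solution, and parabolic regularity upgrades it to a classical solution on each time-continuity interval. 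Conservation of mass in \eqref{totalpop} then follows by taking $\varphi\equiv1$.

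The main obstacle is bookkeeping at the boundary and in time rather than any deep analytic difficulty. The boundary conditions are nonstandard because $Fu=Du_x-gu$ couples $u$ and $u_x$, so one must confirm at each step that it is exactly the imposed flux condition—not a Dirichlet or pure Neumann condition—that eliminates the boundary terms in the weak form. Equally, since $g$ is only piecewise continuous in time, no globally classical solution should be expected; the argument must be organized around weak solutions that are continuous in time in $L^2$, with existence and the energy estimates patched across the finitely many jump times of $g$.
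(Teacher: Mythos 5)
Your proof is essentially correct, but it takes a genuinely different route from the paper: the paper's entire proof is a citation to classical parabolic theory (Theorem 2 in Chapter 5 of Friedman, 1964), which supplies existence and uniqueness of a classical solution once the equation $u_t = Du_{xx} - (gu)_x$ with the Robin-type flux condition $Fu(\pm X,t)=0$ is recognized as a uniformly parabolic problem with sufficiently regular coefficients. You instead build everything by hand in the weak-solution framework: the flux-form weak formulation whose boundary term vanishes exactly because of $Fu(\pm X,t)=0$, a G\r{a}rding inequality via Young's inequality, uniqueness by energy estimate plus Gr\"onwall, and existence by Galerkin approximation with Aubin--Lions compactness, patched across the jump times of $g$. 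What your approach buys is self-containedness and an honest treatment of the fact that $g$ is only piecewise continuous in time---a point the paper's one-line citation glosses over, since Friedman's theorem is stated for coefficients with more temporal regularity and so, strictly speaking, also needs to be applied interval-by-interval and concatenated. Your energy machinery also dovetails with the eigenfunction substitution $u=e^{G}v$ that the paper later exploits in its stability proof. What the paper's approach buys is brevity and a directly classical (pointwise smooth) solution, whereas you must invoke parabolic regularity at the end to upgrade your weak solution---a step you assert but do not carry out, and which is the one place where your argument leans on exactly the kind of classical theory the paper cites. Neither gap is fatal; your proposal is a valid and more detailed alternative.
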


\begin{proof}
The result follows from an application of Theorem 2 in Chapter 5 of Friedman (1964) \cite{friedman2008partial}.
\end{proof}

To prove stability for the autonomous aggregate dynamics problem, we employ the method of separation of variables.

\begin{theorem}
The equilibrium distribution $u^*$ for the aggregate opinion distribution problem \eqref{prob1a}-\eqref{prob1b} with $g(x,t)=\barg(x)$ for $t>t_1$ is globally asymptotically stable.
\end{theorem}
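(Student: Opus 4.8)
The plan is to exploit the fact that for $t>t_1$ the problem is autonomous, so it suffices to treat \eqref{prob1a}--\eqref{prob1b} with $g=\barg$, take the solution at $t=t_1$ as fresh initial data, and show that every such solution converges to $u^*$. Following the suggested method of separation of variables, I would first symmetrize the Fokker--Planck operator. Writing $u(x,t)=u^*(x)\,\psi(x,t)$ and using the equilibrium identity $D(u^*)'=\barg\,u^*$ from \eqref{eqeqn}, a short computation gives
\[
Fu = Du_x-\barg u = D\,u^*\,\psi_x,
\]
so the transport equation becomes $u^*\psi_t=(D u^*\psi_x)_x$ and the no-flux conditions \eqref{bcs} reduce to the homogeneous Neumann conditions $\psi_x(\pm X,t)=0$, since $u^*>0$. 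The value of this substitution is that the resulting spatial operator is in self-adjoint Sturm--Liouville form with strictly positive weight $u^*$.

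Next I would separate variables, seeking $\psi(x,t)=e^{\lambda t}\phi(x)$, which yields the eigenvalue problem
\[
(D u^* \phi')' = \lambda\, u^* \phi, \qquad \phi'(\pm X)=0.
\]
Because $u^*>0$ on $[-X,X]$ and $\barg$ is regular enough for $u^*$ to be $C^1$ (from \eqref{soln1}--\eqref{soln2}), this is a regular Sturm--Liouville problem: its spectrum is real, discrete, and bounded above, say $\lambda_0>\lambda_1>\cdots\to-\infty$, with eigenfunctions $\{\phi_n\}$ forming a complete orthogonal basis of $L^2\big([-X,X];u^*\,dx\big)$. Multiplying the eigenvalue equation by $\phi$, integrating over $[-X,X]$, and integrating by parts (the boundary term vanishing by the Neumann condition) gives the Rayleigh quotient
\[
\lambda = \frac{-D\int_{-X}^X u^*\,(\phi')^2\,dx}{\int_{-X}^X u^*\,\phi^2\,dx}\le 0,
\]
with equality precisely when $\phi$ is constant. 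Hence $\lambda_0=0$ with constant eigenfunction $\phi_0$, corresponding to $u=u^*$, while $\lambda_n<0$ for every $n\ge 1$.

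Finally I would assemble the solution. Expanding the continuous, positive data $u(\cdot,t_1)=u^*\psi(\cdot,t_1)$ in the eigenbasis gives
\[
u(x,t)=\sum_{n=0}^\infty c_n\, e^{\lambda_n(t-t_1)}\,u^*(x)\,\phi_n(x),
\]
so as $t\to\infty$ every mode with $n\ge 1$ decays exponentially (at worst like $e^{\lambda_1(t-t_1)}$) and only the $n=0$ term survives, leaving $c_0\,u^*$. To pin down $c_0$ I would invoke conservation of mass: orthogonality yields $\int_{-X}^X u^*\phi_n\,dx=0$ for $n\ge 1$, so $\int_{-X}^X u\,dx = c_0\int_{-X}^X u^*\,dx = c_0$, and since both $u$ and $u^*$ integrate to $1$ we obtain $c_0=1$ and therefore $u(\cdot,t)\to u^*$. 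The main obstacle is not this formal computation but its rigorous justification: establishing completeness of the Sturm--Liouville eigenfunctions and convergence of the expansion in an appropriate norm (e.g. the weighted $L^2$ norm, from which $L^\infty$ control can be bootstrapped), and confirming that separation of variables captures the general solution rather than a special family. The strict positivity and regularity of $u^*$ are exactly what make the Sturm--Liouville problem regular, so classical spectral theory applies; the one point requiring care is verifying that $u(\cdot,t_1)$ lies in the domain where the expansion converges, which follows from the continuity and positivity of the solution guaranteed by Theorem 1.
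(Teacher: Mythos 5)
Your proof is correct and takes a genuinely different route from the paper's. The paper separates variables directly on $u$, obtaining the eigenvalue problem $[F\phi_n]'+\lambda_n\phi_n=0$ with zero-flux boundary conditions, converts it to Sturm--Liouville form with weight $1/u^*$ by dividing through by $u^*$, and then identifies $\lambda_0=0$ as the \emph{lowest} eigenvalue via the oscillation theorem: $u^*$ is an eigenfunction with no zeros, hence must be $\phi_0$, and positivity of all remaining $\lambda_n$ follows from the strict ordering of the spectrum. You instead perform the ground-state transformation $u=u^*\psi$, which converts the operator into self-adjoint form with weight $u^*$ and homogeneous Neumann conditions, and you obtain the sign of the spectrum directly from the Rayleigh quotient: all eigenvalues are $\le 0$, with equality exactly for constant eigenfunctions. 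Your route buys two things: the decay of every non-equilibrium mode comes from an energy identity rather than the more delicate zero-counting theorem, and the identification $c_0=1$ follows cleanly from weighted orthogonality of $\phi_n$ ($n\ge 1$) to the constant eigenfunction together with conservation of mass --- a point the paper asserts from the unit-integral requirement without spelling out that $\int_{-X}^{X}\phi_n\,dx=0$ for $n\ge 1$. The paper's route, in turn, exhibits $u^*$ itself as the ground state of the untransformed eigenvalue problem, which is conceptually direct. One item in the paper's proof that you omit: it closes with a maximum-principle argument showing that an initially positive density remains positive for all time. That step is not needed for the convergence claim --- your argument is complete without it --- but it is what guarantees the solution stays a physically meaningful density, so you should either append such an argument (a local minimum at an interior point $\hat{x}$ with $u_x(\hat{x},\hat{t})=0$ forces $u_t=Du_{xx}>0$ there, so $u$ cannot reach zero) or note explicitly that positivity preservation is a separate matter from asymptotic stability.
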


\begin{proof}
Let $u_1(x)=u(x,t_1)$ and $\tau=t-t_1$.  Replacing $t$ with $\tau$, we obtain the autonomous problem with the profile at time $t_1$ serving as the initial condition at time $\tau=0$; hence, we can assume $t_1=0$ without loss of generality. 
Assuming solutions of the form
\beq
u_n(x,t)=\psi_n(t) \phi_n(x)
\eeq
yields the equation
\[ \frac{\psi_n'}{\psi_n}=\frac{[F \phi_n]'}{\phi_n}=-\lambda_n, \]
leading to the separated problems
\beq
\psi_n'=-\lambda_n \psi_n, 
\eeq
and
\beq
\label{phiBCs}
[F \phi_n]'+\lambda_n \phi_n=0, \quad F \phi_n(\pm X)=0.
\eeq

Note that if $\lambda=0$, then the eigenvalue problem is the same as the equilibrium problem; hence, $\lambda=0$ is an eigenvalue with corresponding eigenfunction $u^*$.  More generally, we can rewrite the differential equation for $\phi_n$ as
\beq 
\label{phieqn}
D\phi_n''-\barg \phi_n' - \barg' \phi_n +\lambda_n \phi_n = 0.
\eeq
Multiplying by $1/{u^*}(x)$ and using $D {u^*}'=\barg u^*$ recasts the differential equation as
\beq
\left[ \frac{D}{u^*} \phi_n' \right]'-\frac{\barg'}{u^*} \phi_n +\frac{\lambda_n}{u^*} \phi_n = 0.
\eeq
The function $u^*$ is continuous and positive on $[-X,X]$ and $\barg'$ is continuous on $[-X,X]$; therefore, the eigenvalue problem is a Sturm-Liouville problem of regular type; hence, Sturm-Liouville theory offers the following results \cite{haberman1983elementary}:
\benum
\item
The eigenvalues are strictly increasing: $\lambda_0 < \lambda_1 < \cdots$.
\item
The set of eigenfunctions is complete.
\item
The eigenfunctions $\phi_n$ have $n$ zeros on the interval $(-X,X)$. 
\eenum

Properties 1 and 2 guarantee that the initial condition $u_0$ has a series representation
\[ u_0(x) = \sum_{n=0}^\infty c_n \phi_n(x), \]
which then means that the unique solution of the aggregate dynamics problem can be written as
\beq 
\label{expansion}
u(x,t) = \sum_{n=0}^\infty c_n e^{-\lambda_n t} \phi_n(x).
\eeq

We have already seen that $u^*$ is an eigenvector for the eigenvalue $\lambda=0$.  Because $u^*$ has no zeros, Property 3 shows that it must be the eigenvector for $\lambda_0$.  The requirement of a unit total integral then guarantees $c_0=1$, so the leading term of the expansion is $u^*$, as desired.

One detail remains: We must still show that a distribution that is initially positive remains positive. Suppose, by contradiction, that such an initial distribution results in a point $(\tilde{x},\tilde{t})$ where $u=0$. Then for any time $\hat{t}<\tilde{t}$ there must be a point $\hat{x}$ and a value $0<\epsilon < \min_{-X \leq x \leq X} u_0(x)$ such that $u \left( \hat{x},\hat{t} \right)=\epsilon$ is the local minimum for $u\left( x,\hat{t} \right)$.  Because solutions are smooth in $x$, we must have $u_x\left( \hat{x},\hat{t} \right)=0$ and $u_{xx}\left( \hat{x},\hat{t} \right)>0$.  But then $u_t\left( \hat{x},\hat{t} \right)=Du_{xx}\left( \hat{x},\hat{t} \right)>0$; hence, the local minimum must be increasing at time $\hat{t}$. Thus, $u$ cannot reach 0.
\end{proof}

\subsection{Approach to Equilibrium}
\label{approaches}

While the partial differential equation can be solved numerically, there is some mathematical value to obtaining results for the asymptotic approach to equilibrium.
From \eqref{expansion}, we have
\beq
\label{series}
u = u^* + c_1 e^{-\lambda_1 t} \phi_1(x) + \sum_{n=2}^\infty c_n e^{-\lambda_n t} \phi_n(x).
\eeq
This means that the eigenfunction $\phi_1$ represents the shape of the approach to equilibrium.  For a range of $x$ where the magnitude of $\phi_1$ is relatively large, the approach to equilibrium will be relatively slow. 

To approximate the eigenvalue $\lambda_1$ and its eigenfunction, we define the 1-parameter family of functions 
\beq 
\label{Qeqn}
Q(x,t;r) \equiv e^{rt} (u-u^*).
\eeq
From \eqref{series}, we have the asymptotic approximation
\beq
\label{Qasympt}
Q(x,t;r) \sim c_1 e^{(r-\lambda_1) t} \phi_1(x), \quad t \to \infty;
\eeq
hence, the long-term behavior of $Q$ depends on the relationship between $r$ and $\lambda_1$.  If $r>\lambda_1$, then $Q$ blows up as time increases, while if $r<\lambda_1$, $Q \to 0$ as time increases.  By adjusting the $r$ value and plotting $\|{Q}\|$ vs $t$, we can obtain a modest bracketing interval for $\lambda_1$ and plot $Q/\|{Q}\|$ vs $x$ for some large $t$ to approximate the normalized eigenfunction $\phi_1$. We illustrate an example of this analysis in Section \ref{examples_g}.

\subsubsection{Computation of $\lambda_1$ and $\phi_1$}

The approach to equilibrium can be better approximated by applying numerical methods to the problem of finding eigenvalues and eigenvectors.  To determine $\lambda_1$ and $\phi_1$, we must solve the eigenvalue problem \eqref{phiBCs} with the definition \eqref{fluxdef}.  This problem has infinitely many solution pairs $(\lambda_n,\phi_n)$.

The difficulty in solving the eigenvalue problem is that the coefficient $\lambda$ is to be determined as part of the solution rather than being prescribed.  A version of the shooting method resolves this difficulty \cite{Iserles2008}.  
We construct a family of initial value problems having the coefficient $\lambda$ as a parameter.  We can use initial conditions at $x=-X$ and run the solution forward or at $x=X$ and run the solution backwards.  Here we describe the procedure for running the solution forwards.

For convenient system notation, we use $R$ for the eigenfunction $\phi$ and $S$ for the flux $F\phi$.  Thus, the initial value problem is 
\beq
\label{shooting}
DR'=gR+S, \quad S'=-\lambda R, \qquad [R,S](-X)=[\pm 1,0],
\eeq
where the initial condition specification $R(-X)=\pm 1$ is chosen without loss of generality, as the final $\phi_1$ will be normalized.  The choice of $+1$ or $-1$ for $R(-X)$ is arbitrary; in practice, it can be chosen to match the approximate eigenfunction plot obtained from using $Q$ to estimate $\lambda_1$ \eqref{Qasympt}.  
With different values of $\lambda$, the initial value problem yields solutions with different values of $S(X)$.  Eigenvalues are then found as the roots of $y(\lambda) \equiv S(X;\lambda)$.  

\section{Example: Opinion Distribution on Vaccination}
\label{examples_g}

In this section, we illustrate the dynamics of our model under two scenarios with somewhat different influencer profiles. For all scenarios, we take $x_0=0$, $b_0=a$. $k_0=1$, thus making the function $g$ into
\beq
g(x_i,t)=-ax_i-\sum_{j=1}^J \, b_j(t) \cdot k_j(x_i-x_j,t) \cdot (x_i-x_j)~.
\eeq
The $-ax$ term is the general attractor needed to meet the requirements for the function $g$.  For numerical simulations, we take $a=0.1$ and $D=0.1$.

In the absence of data on influence strength as a function of opinion distance, we assume a Gaussian form as an empirical model for $k$ for local attractors:
\beq
k(h)=\frac{1}{\sigma^2} e^{-z^2}, \qquad z=\frac{h}{\sigma}.
\eeq

We consider two specific scenarios: 
\benum
\item
As our default scenario, we consider a system with two local attractors: a strong attractor of broad width at $x_j=1$, corresponding to the combined influence of experts, and a moderate attractor of narrow width at $x_j=-1$ corresponding to the combined influence of contrarian public figures and internet personalities: 
\beq
\label{examp1}
[x,b,\sigma]_m = [1,1,1], \qquad [x,b,\sigma]_d = [-1,0.7,0.5],
\eeq
where we have used the subscripts $m$ and $d$, representing ``medical establishment'' and ``deniers,'' respectively, rather than numeric subscripts. 
With no real data available, we assume a relative strength of $b_m=1$ for the medical establishment.  Influence is most significant up to one standard deviation from $x_j$, so the assumed standard deviations mean that the influence of the experts is most significant for $0<x<2$ and that of the contrarian's is most significant for $-1.5<x<-0.5$.  The combination of these two attractors exerts only a small influence in the middle range of $-0.5<x<0$.  Outside the range $-1.5<x<2$, the tendency toward moderation will be the dominant influence.  

The relative strength $b_d=0.7$ of the deniers was chosen to match empirical data showing that approximately 18\% of Americans in a 2023 survey believed vaccines to be unsafe \cite{annenberg2023vaccine}.  This is a conservative estimate as there are undoubtedly some people who accept the safety of vaccines but doubt their efficacy.  This is particularly likely for diseases like COVID-19, where vaccinated individuals can still be infected, although their outcomes tend to be much better due to the vaccine.
\item
We also consider a ``conflict'' scenario to represent what is already happening in some countries whose governmental medical establishment is diverging from the professional medical establishment \cite{abcnews2025desantis, cdc2025link, cdc2025immunization, pbs2025rfkjr, hhs2025mrna}. In this scenario, the experts (labeled as ``medical'' in scenario 1) partition into two subgroups, a professional medical establishment group (subscript $p$) with the same opinion and standard deviation as the experts in the default scenario but with strength $b_p=0.7$, and a governmental medical establishment group (subscript $g$) of strength $b_g=0.3$ and width $\sigma_g=1.0$ with opinion $x_g=0$, that is,
\beq
\label{examp2}
[x,b,\sigma]_p = [1,0.7,1], \qquad [x,b,\sigma]_g = [0,0.3,1], \qquad [x,b,\sigma]_d = [-1,0.7,0.5].
\eeq
The parameters for the governmental medical establishment were chosen to be conservative estimates to avoid overestimating the likely effects of the changes.  It seems unlikely that the governmental share of the medical community influence is any less than 30\%, and it may well be more if individual physicians follow the governmental recommendations rather than those of their own professional societies or those indicated by scientific reports.  Nor is it unreasonable to make the governmental opinion neutral, in light of funding cuts and recommendation changes that are already occurring \cite{abcnews2025desantis, hhs2025mrna}. 
\eenum

\begin{figure}[ht]
    \centering        
    \includegraphics{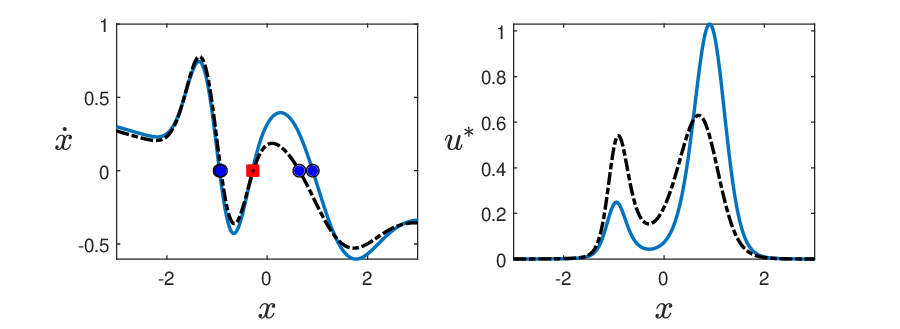}
    \caption{Left: The opinion dynamics function $g(x)=\dot{x}$ for two scenarios, both with moderate local attractors at $x=-1$; stable equilibria are marked with disks, while unstable equilibria are marked with squares.  Right: Equilibrium solutions for the opinion density function. The solid curves are for the scenario described by \eqref{examp1} with a strong local attractor at $x=1$, while the dashed curve represents the scenario described by \eqref{examp2} with moderate local attractors at $x=1$ and $x=0$.}
    \label{ode_fig2}
\end{figure}

The rate of change functions for the two scenarios appear in the left panel of Figure \ref{ode_fig2}, with the first scenario solid and the second scenario dashed.  Both scenarios show two stable equilibria, one positive and one negative, along with an unstable equilibrium that delineates the basins of attraction for the stable equilibria.  The negative stable equilibrium and the unstable equilibrium change only slightly (from $-0.95$ to $-0.92$ and $-0.30$ to $-0.28$); however, the positive stable equilibrium changes significantly from $0.91$ to $0.64$.  In terms of the scenario descriptions, the splitting of the expert opinion does not change the ultimate opinion of those responding to negative influence, nor does it change the direction of movement in terms of the original opinion.  However, it significantly decreases the ultimate opinion of individuals drawn to the more positive attractors to a value representing a compromise between the positive opinion of the professional medical establishment and the neutral opinion of the governmental medical establishment. 

\begin{figure}[ht]
    \centering        
    \includegraphics{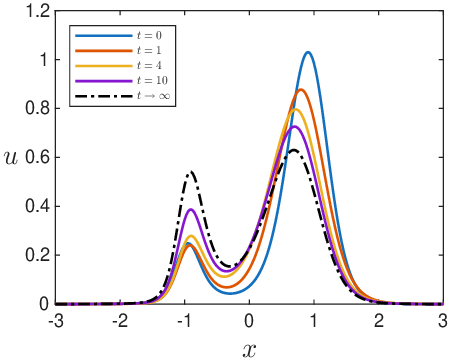}
    \caption{Time series plots for the problem \eqref{prob1a} using the equilibrium solution for the default scenario \eqref{examp1} as the initial condition and the instance of $g$ from the speculative scenario \eqref{examp2} for the partial differential equation and boundary conditions.} 
    \label{pdesim_ex1}
\end{figure}

As an example of the evolution of opinion, we consider a scenario for \eqref{prob1a} in which the equilibrium solution for scenario described by \eqref{examp1} serves as the initial condition and the function $g$ resulting from the parameter values in \eqref{examp2} is used in the differential equation and boundary conditions.  Figure \ref{pdesim_ex1} shows time series plots obtained using MATLAB's pdepe function \cite{matlabcite}.  The shift of the positive peak toward less intensity and lower positive opinion happens relatively quickly compared to the increase in intensity of the negative peak.  The results suggest that the first sign of change will be a decrease of positive opinion among those who are not under negative influence, as seen by a general shift of the right portion of the $u$ vs $x$ curve to the left.  The shift toward a larger number of people with negative opinion is only seen much later.

\begin{figure}[ht]
    \centering            
    \includegraphics{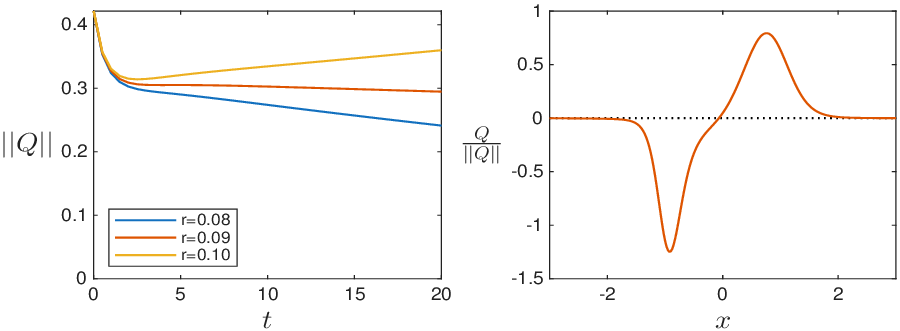}
    \caption{The functions $\|Q\|(t;k)$ and $Q(x,20;0.09)$ from \eqref{Qeqn}; the former shows that $\lambda_1 \approx 0.09$ and the latter shows a visual approximation of $\phi_1$.}
    \label{lambda1_sim}
\end{figure}

To better understand the approach to equilibrium for the example, we employ both methods described in Section \ref{approaches}.  First, we consider the function family $Q(x,t;r)$ defined in \eqref{Qeqn}  The left panel of Figure \ref{lambda1_sim} shows plots of $\|Q\|$ versus time for selected values of $r$.  As noted above, this plot should converge, in theory, to a horizontal asymptote as $t \to \infty$.  In practice, numerical error in the simulation prevents this from happening.  Nevertheless, it is clear from the plot that $\lambda_1$ lies in the interval $(0.08,0.10)$.  A visual approximation of the corresponding eigenfunction $\phi_1$ appears in the right panel of the plot.  This plot is consistent with the observation from the numerical solution of Figure \ref{pdesim_ex1} that convergence to equilibrium is slowest near the distribution peaks. 

\begin{figure}[ht]
    \centering            
    \includegraphics{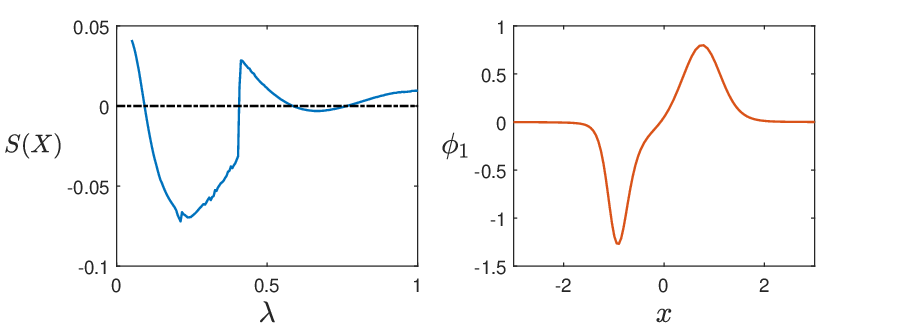}
    \caption{(left) Plot of $S(X)$ as a function of a parameter $\lambda$, where eigenvalues are the roots of this function. (right) The normalized eigenfunction $\phi_1$.}
    \label{lambda1_ode}
\end{figure}

For a better estimate of the values of $\lambda_n$ and the function $\phi_1$, we plot the value achieved by simulation for $S(X)$ in \eqref{shooting} as a function of $\lambda$.  This appears in the left panel of Figure \ref{lambda1_ode}, which shows that there are four eigenvalues less than 1, with $\lambda_1=0.0927$ found numerically as the first root of $S(X)$.  The corresponding eigenfunction appears in the right panel of the figure.

\subsection{Effect of Parameter Values}

\begin{figure}[ht]
    \centering            
\includegraphics[width=.475\textwidth]{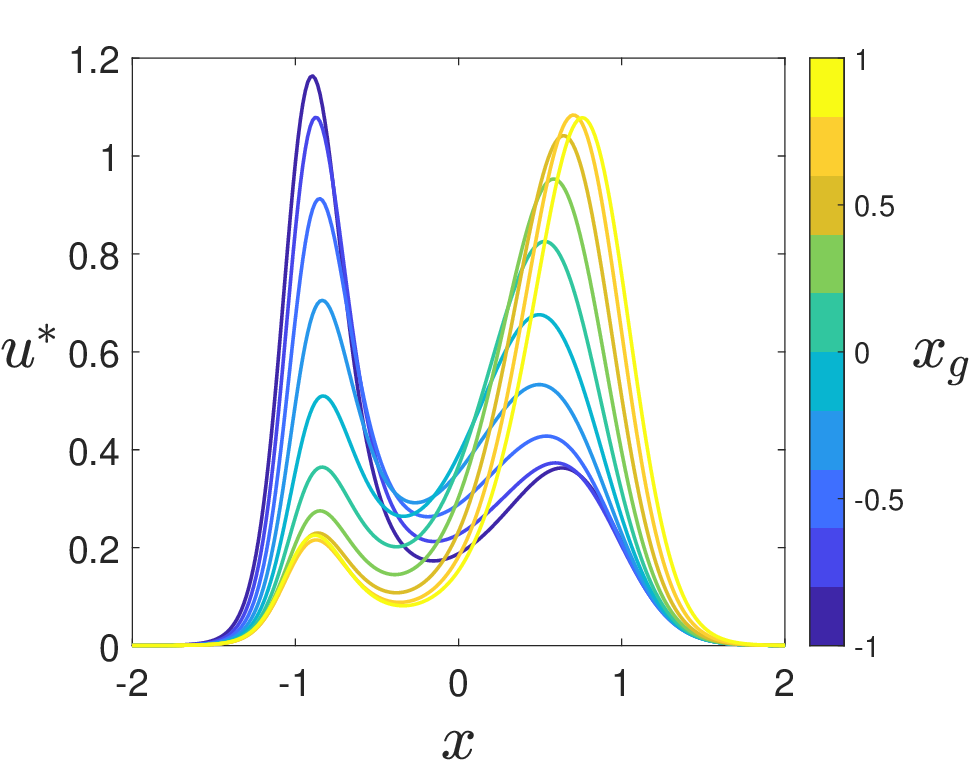}\quad
 \includegraphics[width=.475\textwidth]{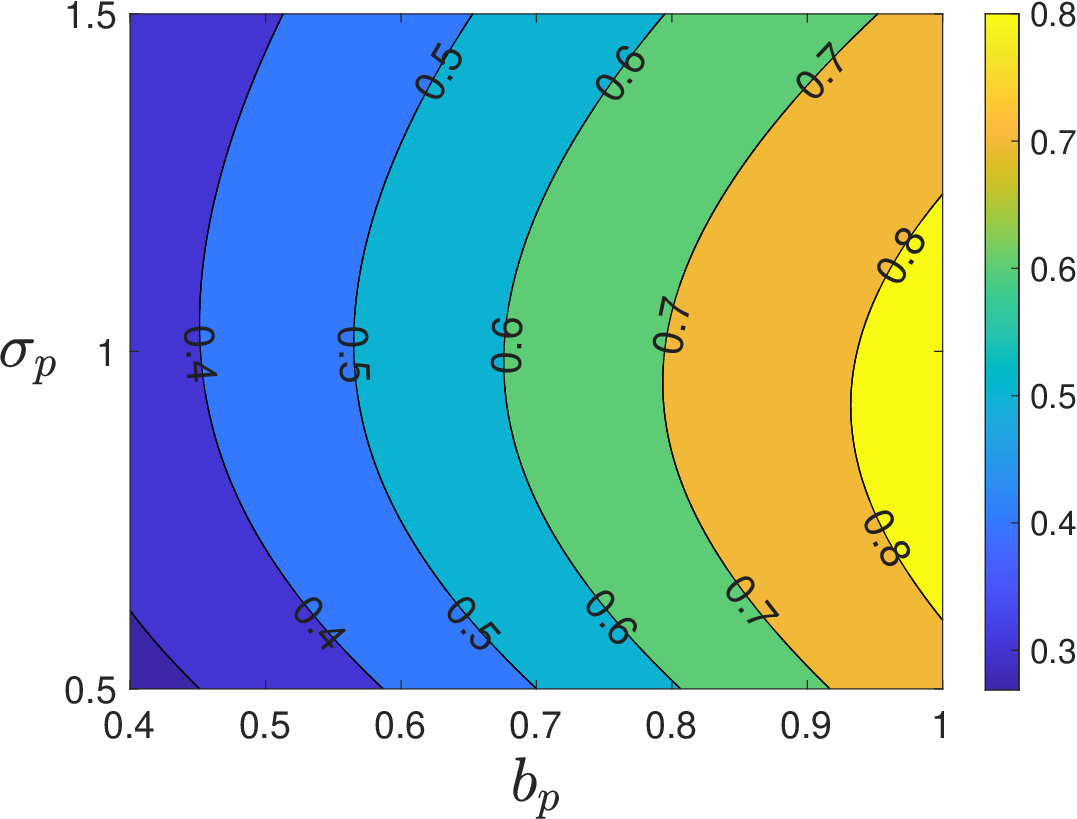}
    \caption{(left) The distribution of opinions at equilibrium with varying $x_g$; (right) level curves of individuals with opinion at least 0 at equilibrium with varying $b_p$ and $\sigma_p$; in both cases the other parameters were set at the 2025 default levels.}
    \label{paramstudies} 
\end{figure}

To assess the effect of the parameter values chosen for the \eqref{examp2} scenario, we consider some studies in which we vary one or more parameters from among those that are likely to be influenced by public health outreach or by current epidemiological conditions, such as the rate at which new infections are being reported.  These are shown in Figure \ref{paramstudies}. 
Rather than examine all the scenario parameters, we consider only $x_g$, $b_p$, and $\sigma_p$.  The first of these is clearly very sensitive to public health conditions, as shown by the large range of recommendations for the measles vaccine offered by state governments as the number of new measles cases waxes and wanes \cite{KDHEmeaslevaccine}. The parameters $b_g$ and $\sigma_g$ are also likely variable in practice, but these should make less of a difference than changes in $x_g$.  The parameter $x_p=1$ is based on medical science and therefore very stable, while the opinions and reach of deniers is far less likely to respond to changes in current conditions than the parameters for the medical establishments.  We do consider variation in $b_p$ and $\sigma_p$, which could change based on outreach from the professional medical establishment.

The left panel of Figure \ref{paramstudies} shows the variation of the equilibrium solution $u^*(x)$ as the opinion value $x_g$ of the governmental medical establishment varies over the full range $[-1,1]$.  Surprisingly, the $x$-values of the peaks are not very sensitive to the opinion of the governmental establishment, but the heights of the peaks are quite sensitive.  As the governmental establishment gradually changes its opinion from positive to negative, more people whose opinions are flexible change from positive to negative.  Opinion is highly polarized for any value of $x_g$, with low population densities for neutral opinion levels.

The right panel shows a contour plot of the function
\beq
P^*_+ = \int_0^X u^*(x) \,dx,
\eeq
which represents the fraction of individuals whose opinion is nonnegative at equilibrium and serves as a simple measure of the overall opinion landscape, as the strength $b_p$ and standard deviation $\sigma_p$ of the professional medical establishment change over a range of values on either side of the default values.  These parameters are likely to be somewhat dependent on public awareness campaigns, with $b_p$ increasing because of increased messaging and $\sigma_p$ increasing because of a wider range of outlets for messaging.  We see that significant changes in the strength parameter $b_p$ can have a large impact on the model outcome.  The impact of the standard deviation parameter $\sigma_p$ is smaller and non-monotone, with the largest population with opinion $x>0$ occurring for a moderate value of $\sigma_p$.  The smaller response suggests that efforts to improve outcomes through greater investment of resources by professional societies should focus on the total messaging with much less concern about the population segments targeted by the messaging.

\section{Discussion}

In this work, we have developed and analyzed a novel macroscopic model for the evolution of opinion driven largely by ``influencers,'' consisting of organizations and collections of individuals who have an outsize influence on others, either because of an official role or a significant number of followers. These influencers attract individuals to their opinions.  This alters the opinion landscape as cohorts of individuals carry their population size with them as they move to new opinions, with the evolution of the opinion distribution governed by the partial differential equation of advective and diffusive transport.  The resulting model has desirable mathematical properties; in particular, there is always a unique solution, and the opinion distribution converges to an equilibrium distribution for scenarios in which all influencer parameters approach steady values.  

From a modeling perspective, the primary interest in the model is to understand how the opinion landscape changes as key influencer properties change.  Our basic vaccination scenario includes three groups of influencers: a professional medical establishment, whose opinion is positive and unchanging but whose influence level and reach might vary; a governmental medical establishment, whose opinion could vary along the spectrum from very positive to very negative; and a collective of deniers, whose opinion is negative and unchanging and whose influence level and reach are assumed to be roughly unchanging.  The principal effect of a change in governmental opinion from positive to negative is to shift pockets of population from opinions similar to the professional medical establishment to opinions similar to the deniers.  This effect is large even though we set the parameter $b_g$ for the strength of the governmental influence to a conservative value. The strength of the professional establishment influence can have a significant effect on the opinion distribution, as seen by the variation in the total population fraction having a nonnegative opinion.  The standard deviation of the professional establishment influence has only a small effect.

One difficulty in using a model such as ours is finding appropriate parameter values.  With only minimal data, one must guess at the values.  Consequently, the results should not be taken as quantitatively meaningful, but the qualitative features of the results can be expected to hold with changes in the base parameters.  A related difficulty is the lack of a measurable time scale.  The time variable $t$ in our model does not have specific units.  This would also require data that is unavailable.  Without such data, the model can predict trends, but not the amount of time required for trends to be seen.

The opinion dynamics model presented here could potentially be coupled with an epidemiology model such as the one in Jiang et al. \cite{jiang2022attitude}, which makes empirical assumptions about how willingness to obtain vaccination changes as the disease incidence changes.  In principle, an opinion dynamics model should be used to model changes in willingness to obtain vaccination.  This could be done by using information from the epidemiology model to change parameter values in the opinion dynamics model and then using some output from the opinion dynamics model to change parameters related to vaccination willingness in the epidemiology model.

\section{Acknowledgments}

The authors would like to thank the Research Experience for Undergraduate Faculty (REUF) program for supporting this work. REUF is a program of the American Institute of Mathematics (AIM) and the Institute for Computational and Experimental Mathematics (ICERM), made possible by the support from the National Science Foundation (NSF) through DMS 1239280. Author Kurianski would also like to thank California State University, Fullerton for their support through the Junior/Senior Faculty Research Grant.

\appendix

\section{Model Derivation} \label{sec:model-derivation}

The forces that change individual opinions also act to change the population distribution of those opinions.  This is described by the advective transport equation, which we derive here, with diffusion to be added later.
The dependent variable of the model is $u(x,t)$, the population density function for a distribution of opinions $x$ on a continuous range of opinion values $S$.  We derive the advective transport equation through a careful accounting of changes in opinion $x$, opinion density $u(x,t)$, and population $u(x,t) dx$ on the infinitesimal time interval $[t,t+dt]$.  

Opinion $x$ changes according to the equation $\dot{x}=g(x,t)$; hence the opinion $x$ at time $t$ changes to the opinion
\beq
\label{newx}
x(t+dt) = x(t)+g(x,t) \,dt + O(dt^2)
\eeq
at time $t+dt$.
Using \eqref{newx}, the opinion density changes to
\beq
\label{newu}
u(x(t+dt),t+dt)=u(x,t)+[u_t(x,t)+u_x(x,t)g(x,t)]dt+ O(dt^2)
\eeq
at time $t+dt$.

To compute the population change for the cohort with initial opinion $[x,x+dx]$, we must first account for the change in the width $dx$ of the opinion interval owing to the different rates of change for the opinions $x$ and $x+dx$.  From \eqref{newx}, the opinion $x+dx$ changes to
\begin{align}
\label{newxplusdx}
\begin{split}
[x+dx](t+dt) &= [x+dx]+g(x+dx,t) \,dt + O(dt^2) \\ 
             &=[x+dx]+[g(x,t)+g_x(x,t) \,dx]\,dt+O(dx^2\,dt+dt^2).    
\end{split}
\end{align}
Subtracting \eqref{newx} yields a new interval width of
\beq
\label{newh}
[1+g_x(x,t)\,dt+O(dx\,dt)] dx.
\eeq
At time $t+dt$, the population that was originally $u(x,t)dx$ is now the product of the new population density \eqref{newu} and the new interval width \eqref{newh}, or (with all quantities evaluated at the original coordinates $(x,t)$)
\beq
\label{newp}
[u+(u_t+u_xg+ug_x) \,dt + O(dx \, dt+dt^2)] \,dx.
\eeq
The total population in the initial cohort is conserved from time $t$ to time $t+dt$; hence, the new population \eqref{newp} has to be the same as the original population $u(x,t) \,dx$.  Subtracting this quantity from the new population, removing the common factors of $dx$ and $dt$, and setting the result equal to 0 yields
\[
(u_t+u_xg+ug_x) + O(dx +dt) =0.
\]
Now taking limits as $dx$ and $dt$ go to 0, we obtain the advective transport equation
\beq
u_t +u_xg+ug_x=0,
\eeq
which we can write as
\beq
u_t=(Fu)_x, \qquad Fu(x,t)=-gu.
\eeq
The quantity $-gu$ is the advective flux of $u$.  The transport equation of opinion dynamics results from adding the diffusive flux $Du_x$ to $Fu$.

\bibliographystyle{elsarticle-harv}
\bibliography{influ-biblio}

\end{document}